\newtheorem{theorem}{Theorem}
\newtheorem{proposition}[theorem]{Proposition}
\newcommand{\expec}{\mathbf{E}}
\newcommand{\prob}{\mathbf{P}}
\title[Digital double barrier options]{Digital double barrier options: Several barrier periods and structure floors}
\author{S\"uhan Altay}
\author{Stefan Gerhold}
\author{Karin Hirhager}
\address{Vienna University of Technology, Wiedner Hauptstra\ss{}e 8--10,
A-1040 Vienna, Austria}
\email{sgerhold at fam.tuwien.ac.at}
\thanks{This work was financially supported by the Christian Doppler Research
Association (\href{http://www.cdg.ac.at/}{CDG}).
The authors gratefully acknowledge the fruitful collaboration
and support by the \href{http://www.bankaustria.at/}{Bank Austria},
\href{http://www.cor.fja.com/}{COR \& FJA}
and the Austrian Federal Financing Agency
(\href{http://www.oebfa.co.at/}{\"OBFA}) through CDG and the
CD-Laboratory for Portfolio Risk Management (PRisMa Lab)
\url{http://www.prismalab.at/}.
\\ We thank Richard~C.\ Bradley and Antoine Jacquier for valuable comments and discussions.
}
\date{\today}
\begin{document}

\begin{abstract}
We determine the price of digital double barrier options with an arbitrary number
of barrier periods in the Black-Scholes model. This means that the barriers are active
during some time intervals, but are switched off in between.
As an application, we calculate
the value of a structure floor for structured notes whose individual
coupons are digital double barrier options. This value can also be approximated
by the price of a corridor put.
\end{abstract}

\keywords{Double barrier option, structure floor, occupation time, corridor option}

\subjclass[2010]{Primary: 91G20; Secondary: 60J65} 

\maketitle

\section{Introduction}

We consider digital double barrier options with an arbitrary number
of barrier periods.
This means that the holder receives the payoff only if the underlying
stays between the two barriers in certain specified time intervals.
While such contracts might make sense by themselves
(as a weather or energy derivative with seasonal barriers, say), our motivation is
to use them for the pricing of certain structured notes with several coupons.
Such trades often feature
an aggregate floor at the final coupon date, which increases the total payoff to a guaranteed 
amount if the sum of the coupons is less than this amount.
Pricing this terminal premium requires the law of the sum of the coupons, which
can be recovered from its moments.
If the individual coupons of the note are digital barrier options,
then these moments can be computed from the prices of options
of the kind described above, where the sets of barrier
periods are subsets of the coupon periods of the note.

Recall that Monte Carlo pricing of barrier contracts is tricky,
because the discretization produces a downward bias for the barrier hitting
probability. For single barrier options, this difficulty can be overcome
using the explicit law of the maximum of the Brownian bridge~\cite{AnBr96,BeDyZh97}.
For double barrier options, the exit probability of the Brownian
bridge is not known; see Baldi et al.~\cite{BaCaIo99} for an approximate
approach using sample path large deviations. These numerical
challenges led us to investigate exact valuation formulas.

The paper is structured as follows. In Section~\ref{se:one per} we define
the payoffs we are interested in and price them for a single
barrier period. Section~\ref{se:arb per} extends the result to arbitrarily
many periods of active barriers. Our main application, namely
the pricing of structure floors, is presented in Section~\ref{se:struct}.
Since our exact pricing formula is fairly involved, we consider an
asymptotic approximation for a large number of periods in Section~\ref{se:corr}.

\section{Preliminaries and pricing for one period}\label{se:one per}

We assume that the underlying $(S_t)_{t\geq0}$ has the risk-neutral dynamics
\[
  dS_t/S_t = r dt + \sigma dW_t
\]
with constant interest rate $r>0$, volatility $\sigma>0$
and a standard Brownian motion~$W$.
Consider a digital barrier option with two
barriers~$B_{\mathrm{low}}$
and~$B_{\mathrm{up}}$ that are
activated at time $T_0>0$ and stay active for a time period of length~$P>0$.
At maturity $T_0+P$, the payoff is one unit of currency if the underlying
has stayed between the two barriers:
\begin{equation}\label{eq:C_1}
  C_1 := \mathbf{1}_{\{B_{\mathrm{low}} < S_t < B_{\mathrm{up}},\ 
    t \in [T_{0},T_{0}+P] \}}.
\end{equation}
Let us denote the price of this ``one-period
double barrier digital'' by
\begin{equation}\label{eq:def BD1}
  BD(S_t,t;\{T_0\},P,B_{\mathrm{low}},B_{\mathrm{up}},r):=
  e^{-r(T_{0}+P)} \expec[ C_1],
\end{equation}
where $\expec$ is the expectation w.r.t.\ the pricing measure~$\mathbf{P}$.
In the terminology of Hui~\cite{Hu97},
this is a \emph{rear-end} barrier option, because the two barriers are alive only
towards the end of the contract, namely between~$T_0$ and maturity $T_0+P$.
Hui~\cite{Hu97} has determined the price for a barrier \emph{call} of this kind.
The digital case is a simple modification, but we go through it to prepare
the calculation of the price for several barrier periods (see Section~\ref{se:arb per}).
The value function
\[
  f(S,t) := BD(S,t;\{T_0\},P,B_{\mathrm{low}},B_{\mathrm{up}},r)
\]
satisfies the Black-Scholes PDE
\[
\frac{\partial f}{\partial t} + \frac{1}{2} \sigma^{2} S^{2} \frac{\partial^{2} f}{\partial S^{2}} + r S \frac{\partial f}{\partial S} - rf = 0
\]
with the terminal condition $f(S,T_0+P)=1$, for $S \in (B_{\mathrm{low}},
B_{\mathrm{up}})$, and the boundary conditions
$f(B_{\mathrm{low}},t) = f(B_{\mathrm{up}},t) =0$ for $t \in [T_0, T_0+P]$.
We use the standard transformation $f(S,t) = e^{\alpha x + \beta \tau} U(x,\tau)$,
where
\begin{align} 
  x &:= \log(S/B_{\mathrm{low}}), \qquad \tau := \tfrac12 \sigma^2(T_0+P - t),
    \label{eq:coord} \\
  \alpha &:= - \frac{1}{2} \left(\frac{2}{\sigma^{2}}r - 1\right), \qquad
  \beta := -\frac{2r}{\sigma^{2}} - \alpha^{2}, \notag
\end{align}
to transform the Black-Scholes PDE into the heat equation
\begin{equation}\label{eq:heat}
  \frac{\partial^{2} U}{\partial {x}^{2}} = \frac{\partial U}{\partial \tau}.
\end{equation}
The time points $(0,T_0,T_0+P)$ are thus converted to $(\tfrac12 \sigma^2(T_0+P),
p,0)$, where $p:=\tfrac12 \sigma^2 P$ is the barrier period length in the new time scale.
The boundary conditions in the new coordinates are
\begin{equation}\label{eq:boundary}
  U(0,\tau)=U(L,\tau)=0, \qquad \tau \in [0, p],
\end{equation}
where $L:=\log(B_{\mathrm{up}}/B_{\mathrm{low}})$.
The terminal condition translates to the initial condition
\begin{equation}\label{eq:init}
  U(x,0) = e^{-\alpha x}, \qquad x \in (0,L).
\end{equation}
\begin{proposition}\label{prop:one per}
For $0<t<T_0$, the price of a barrier digital with barrier period $[T_0,T_0+P]$ and
payoff~$C_1$ at~$T_0+P$ (see~\eqref{eq:C_1}) is
\begin{multline}\label{eq:one per}
  BD(S,t;\{T_0\},P,B_{\mathrm{low}},B_{\mathrm{up}},r)
    = \sqrt{2\pi} \left( \frac{S}{B_{\mathrm{low}}} \right)^\alpha
    \sum_{k=1}^{\infty} k \frac{1-(-1)^{k} e^{-\alpha L} }{\alpha^{2} L^{2} + k^{2} \pi^{2}} 
    e^{-(\frac{k\pi}{L})^{2}p+\beta \tau}  \\
  \cdot
    \int_{-\frac{x}{\sqrt{2(\tau-p)}}}^{\frac{L-x}{\sqrt{2(\tau-p)}}}   
    \sin\left(\frac{k\pi}{L}(x+y \sqrt{2(\tau-p)})\right) e^{-y^2/2}dy.
\end{multline}
\end{proposition}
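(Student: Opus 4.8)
The plan is to solve the heat equation~\eqref{eq:heat} in two successive phases, corresponding to the active-barrier window $\tau\in[0,p]$ and the barrier-free window $\tau>p$ (recall that large~$\tau$ corresponds to small real time~$t$, since $\tau=\tfrac12\sigma^2(T_0+P-t)$).

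For the first phase I would separate variables and expand the solution in the Dirichlet eigenfunctions $\sin(k\pi x/L)$ of the interval $(0,L)$, which automatically satisfy the absorbing boundary conditions~\eqref{eq:boundary}. Matching the initial condition~\eqref{eq:init} gives
\[
  U(x,\tau)=\sum_{k=1}^\infty b_k\,\sin\!\Big(\frac{k\pi x}{L}\Big)\,
  e^{-(k\pi/L)^2\tau},\qquad \tau\in[0,p],
\]
with $b_k=\tfrac2L\int_0^L e^{-\alpha x}\sin(k\pi x/L)\,dx$. This integral is elementary, and a short computation gives $b_k=2k\pi\,(1-(-1)^k e^{-\alpha L})/(\alpha^2L^2+k^2\pi^2)$, which already produces the rational factor appearing in~\eqref{eq:one per}.

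The key conceptual step is the passage across $\tau=p$. At that instant (real time $T_0$) the barriers switch off, but they are still active at $T_0$ itself, so if the underlying lies outside $(B_{\mathrm{low}},B_{\mathrm{up}})$ the option is already knocked out. Hence the natural datum for the second phase is $U(\cdot,p)$ on $(0,L)$ extended by zero to all of $\mathbb{R}$. For $\tau>p$ the barriers impose no constraint, so~\eqref{eq:heat} holds on the whole line and
\[
  U(x,\tau)=\int_{\mathbb{R}}\frac{1}{\sqrt{4\pi(\tau-p)}}\,
  e^{-(x-\xi)^2/(4(\tau-p))}\,U(\xi,p)\,d\xi,\qquad \tau>p,
\]
with $U(\cdot,p)$ read as zero outside $(0,L)$. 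Probabilistically this is just the Markov property of~$S$: conditioning on $S_{T_0}$ factorises $\expec[C_1]$ into the first-phase value at $T_0$ integrated against the transition density of $S_{T_0}$. I would then insert the Fourier series, interchange summation and integration, and substitute $y=(\xi-x)/\sqrt{2(\tau-p)}$; the Gaussian prefactor collapses to $(2\pi)^{-1/2}$ and the integration limits become exactly those in~\eqref{eq:one per}. Reinstating $f=e^{\alpha x+\beta\tau}U$ and $e^{\alpha x}=(S/B_{\mathrm{low}})^\alpha$ yields the claimed formula, the constant $\sqrt{2\pi}$ arising from $b_k/\sqrt{2\pi}=\sqrt{2\pi}\,k(1-(-1)^k e^{-\alpha L})/(\alpha^2L^2+k^2\pi^2)$.

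The main obstacle is not any single computation but the rigorous justification of this two-phase decomposition: one has to argue that the zero-extension across the now-inactive barriers is the correct datum, and that \emph{free} propagation, rather than propagation with boundary conditions, governs the region $\tau>p$. The remaining analytic points, namely the uniform convergence of the sine series (so that the term-by-term convolution is legitimate) and the interchange of sum and integral, follow from the factor $e^{-(k\pi/L)^2 p}$, which decays like a Gaussian in~$k$ and renders all the relevant series and integrals absolutely convergent.
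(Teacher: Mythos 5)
Your proposal is correct and follows essentially the same route as the paper's own proof: separation of variables with Dirichlet eigenfunctions on $(0,L)\times(0,p)$, zero extension of $U(\cdot,p)$ outside $(0,L)$, and free heat-kernel convolution for $\tau>p$, with the same substitution producing the integration limits and the constant $\sqrt{2\pi}$. The only cosmetic difference is that you write the heat kernel in its standard form and then change variables, whereas the paper states the convolution directly in the rescaled variable.
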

\begin{proof}
  We have to solve the problem \eqref{eq:heat}--\eqref{eq:init}.
  First consider the rectangle $(0,L)\times(0,p)$.
  There the solution can be found by separation of variables~\cite[Section~4.1]{Ev98}:
  \begin{equation}\label{U rect}
    U(x, \tau) = \sum_{k=1}^{\infty} b_{k} \sin\left(\frac{k\pi}{L}x\right)
      e^{-(\frac{k\pi}{L})^{2}\tau}, \qquad (x,\tau) \in
      (0,L)\times(0,p),
  \end{equation}
  where
  \[
    b_{k} := \frac{2}{L} \int_{0}^{L} e^{-\alpha x_1} \sin\left(\frac{k\pi}{L}x_1\right) dx_1
      = 2k\pi \frac{1-(-1)^k e^{-\alpha L}}{\alpha^2 L^2+k^2\pi^2}
  \]
  are the Fourier coefficients of the boundary function $U(x,0)=e^{-\alpha x}$.
  At $\tau=p$, the solution is given by~\eqref{U rect}
  for $0<x<L$ and vanishes otherwise. Inserting $\tau=p$
  into~\eqref{U rect} yields
  \begin{equation}\label{eq:U middle}
    U(x,p) =
    \begin{cases}
    \sum_{k=1}^{\infty} 2k\pi \frac{1-(-1)^{k} e^{-\alpha L} }{\alpha^{2} L^{2} + k^{2}
      \pi^{2}} \sin(\frac{k\pi}{L}x)
      e^{-(\frac{k\pi}{L})^{2} p},& 0<x<L \\
      0, & x\leq 0 \ \text{or}\ x\geq L.
    \end{cases}
  \end{equation}
  Now we solve for~$U$ in the region $\mathbb{R}\times(p,
  \tfrac12 \sigma^2(T_0+P))$. There are no boundary conditions here, 
  since the barriers are not active in the interval $(0,T_0)$
  (in the original time scale).
  The solution is found by convolving the initial condition~\eqref{eq:U middle}
  with the heat kernel~\cite[2.3.1.b]{Ev98}:
  \begin{align}
    U(x,\tau) &= \frac{1}{\sqrt{2\pi}} \int_{-\infty}^{\infty} U(x+y  
    \sqrt{2(\tau-p)}, p)e^{-y^2/2}dy \notag \\
    &= \frac{1}{\sqrt{2\pi}} 
    \int_{-\frac{x}{\sqrt{2(\tau-p)}}}^{\frac{L-x}{\sqrt{2(\tau-p)}}} 
    U(x+y \sqrt{2(\tau-p)}, p)e^{-y^2/2}dy. \label{eq:conv}
  \end{align}
  Inserting~\eqref{eq:U middle} and rearranging yields~\eqref{eq:one per}.
\end{proof}

\section{Double barrier digitals with
arbitrarily many periods}\label{se:arb per}

For~$n$ tenor dates
\[
  0 < T_0 < \dots < T_{n-1}
\]
and a fixed period length $P>0$, we consider a contract that
pays one unit of currency at time~$T_{n-1}+P$, if the underlying
has remained between the two barriers~$B_{\mathrm{low}}$
and~$B_{\mathrm{up}}$ during each of the time intervals $[T_i,T_i+P]$,
$i=0,\dots,n-1$. By the risk-neutral pricing formula, the price of this ``multi-period
double barrier digital'' is given by
\begin{equation}\label{eq:def BD}
  BD(S_t,t;\{T_0,\dots,T_{n-1}\},P,B_{\mathrm{low}},B_{\mathrm{up}},r):=
  e^{-r(T_{n-1}+P)} \expec\left[\prod_{i=1}^n C_i\right],
\end{equation}
where
\begin{equation*}
  C_i := \mathbf{1}_{\{B_{\mathrm{low}} < S_t < B_{\mathrm{up}},\ 
    t \in [T_{i-1},T_{i-1}+P] \}}.
\end{equation*}
To calculate the price,
we use the coordinate change~\eqref{eq:coord} again. The~$n$ barrier periods
$[T_i,T_i+P]$ are mapped to $[\tau_i,\tau_i+p]$, where
\[
  \tau_i := \tfrac12 \sigma^2(T_{n-1}-T_{i-1}), \qquad i = n,\dots,1,
\]
are the images of the barrier period endpoints under
the coordinate change (see Figure~\ref{fig:barriers}).
\begin{figure}[h]
  \includegraphics[scale=1.0]{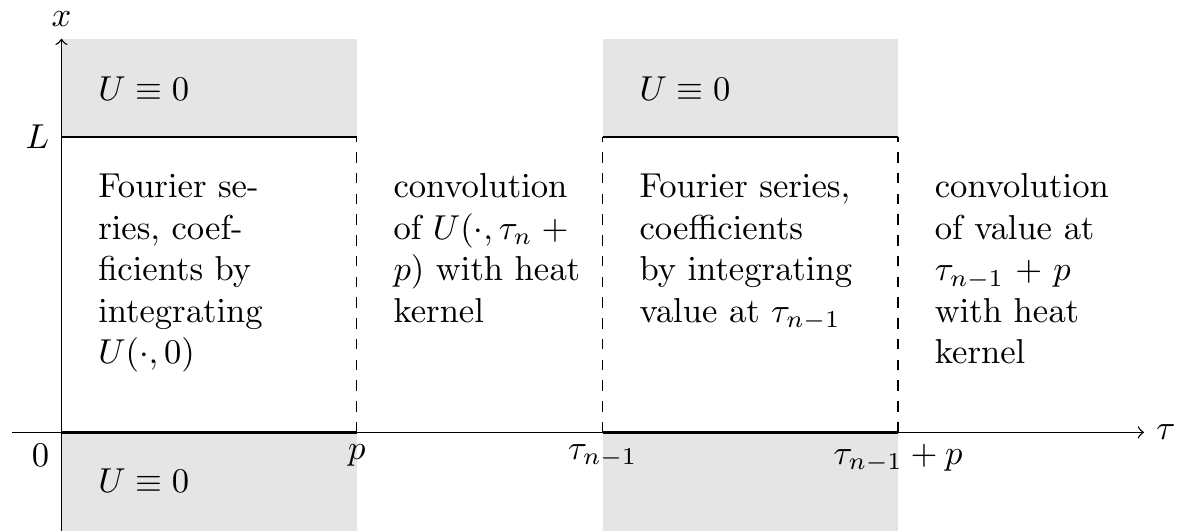}
  \caption{Solving the boundary value problem for an arbitrary number of barrier periods.}
  \label{fig:barriers}
\end{figure}
Define the following auxiliary functions:
\begin{multline*}
  h_j(k_1,\dots,k_{j+1}; x_1,\dots,x_{j+1}; y_1,\dots,y_{j+1}; x,\tau) \\
  := \tfrac{1}{\sqrt{2\pi}} e^{-y_{j+1}^2/2}\ \mathbf{1}_{\left[-\tfrac{x}{\sqrt{2(\tau-(\tau_{n-j}+p))}},
  \tfrac{L-x}{\sqrt{2(\tau-(\tau_{n-j}+p))}}\right]}(y_{j+1}) \\
  \cdot g_j(k_1,\dots,k_{j+1}; x_1,\dots,x_{j+1};
    y_1,\dots,y_j; x + y_{j+1}\sqrt{2(\tau-(\tau_{n-j}+p))} , \tau_{n-j}+p)
\end{multline*}
and
\begin{multline*}
  g_j(k_1,\dots,k_{j+1}; x_1,\dots,x_{j+1}; y_1,\dots,y_j; x,\tau) \\
  := \tfrac{2}{L}  \sin \tfrac{k_{j+1} \pi x_{j+1}}{L} \sin \tfrac{k_{j+1} \pi x}{L}
    e^{-(k_{j+1}\pi/L)^2(\tau-\tau_{n-j})} \\
  \cdot h_{j-1}(k_1,\dots,k_j; x_1,\dots,x_j; y_1,\dots,y_j; x_{j+1},\tau_{n-j}),
\end{multline*}
with the recursion starting at
\begin{equation}\label{eq:g0}
  g_0(k_1;x_1;;x,\tau) := \tfrac{2}{L} e^{-\alpha x_1}
  \sin \tfrac{k_1 \pi x_1}{L} \sin \tfrac{k_1 \pi x}{L}
    e^{-(k_1\pi/L)^2\tau}.
\end{equation}
The following theorem contains our pricing formula.
The first formula~\eqref{eq:U1}
is for time points inside a barrier period, whereas
the second expression~\eqref{eq:U2} holds for valuation times where the barriers
are not active.
\begin{theorem}\label{thm:main}
  The value function~\eqref{eq:def BD} equals $e^{\alpha x + \beta \tau} U(x,\tau)$,
  where for $0\leq j < n$, $\tau_{n-j} \leq \tau \leq \tau_{n-j} + p$, $0<x<L$, we have
  \begin{multline}\label{eq:U1}
  U(x,\tau) = \underbrace{\int_{-\infty}^\infty \dots \int_{-\infty}^\infty}_{j}
    \underbrace{\int_0^L \dots \int_0^L}_{j+1}
    \sum_{k_1=0}^\infty \dots \sum_{k_{j+1}=0}^\infty  \\
     g_j(k_1,\dots,k_{j+1}; x_1,\dots,x_{j+1}; y_1,\dots,y_j; x,\tau)
    dx_1 \dots dx_{j+1} dy_1 \dots dy_{j},  
  \end{multline}
  whereas for $0\leq j <n$, $\tau_{n-j} + p < \tau < \tau_{n-(j+1)}$
  (with $\tau_0:=\infty$), $x\in\mathbb{R}$, we have
  \begin{multline}\label{eq:U2}
    U(x,\tau) = \underbrace{\int_{-\infty}^\infty \dots \int_{-\infty}^\infty}_{j+1}
    \underbrace{\int_0^L \dots \int_0^L}_{j+1}
    \sum_{k_1=0}^\infty \dots \sum_{k_{j+1}=0}^\infty  \\
     h_j(k_1,\dots,k_{j+1}; x_1,\dots,x_{j+1}; y_1,\dots,y_{j+1}; x,\tau)
    dx_1 \dots dx_{j+1} dy_1 \dots dy_{j+1}. 
  \end{multline}
\end{theorem}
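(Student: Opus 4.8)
The plan is to prove both formulas simultaneously by induction on $j$, exploiting the recursive structure already built into $g_j$ and $h_j$. The base case $j=0$ is essentially Proposition~\ref{prop:one per}: for $\tau_n \le \tau \le \tau_n + p$ the function $g_0$ in~\eqref{eq:g0} reproduces the separation-of-variables solution~\eqref{U rect} (after summing over $k_1$ and integrating over $x_1$, the factor $\tfrac{2}{L}e^{-\alpha x_1}\sin\tfrac{k_1\pi x_1}{L}$ recovers the Fourier coefficient $b_{k_1}$), so~\eqref{eq:U1} with $j=0$ is exactly the heat-equation solution on the last barrier rectangle with initial data $e^{-\alpha x}$. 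Likewise~\eqref{eq:U2} with $j=0$ is the heat-kernel convolution~\eqref{eq:conv} of that rectangle solution, which is the content of Proposition~\ref{prop:one per}.

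The inductive step is where the genuine work lies, and it splits into two alternating propagation arguments as time $\tau$ increases. First I would show that~\eqref{eq:U2} at level $j$ follows from~\eqref{eq:U1} at level $j$: on the gap $\tau_{n-j}+p < \tau < \tau_{n-(j+1)}$ the barriers are inactive, so $U$ solves the free heat equation on all of $\mathbb{R}$ with initial data $U(\cdot,\tau_{n-j}+p)$ given by~\eqref{eq:U1}. Convolving with the heat kernel over the window $\sqrt{2(\tau-(\tau_{n-j}+p))}$ introduces exactly the extra Gaussian factor $\tfrac{1}{\sqrt{2\pi}}e^{-y_{j+1}^2/2}$, the shifted spatial argument $x+y_{j+1}\sqrt{2(\tau-(\tau_{n-j}+p))}$, and the truncation indicator on $y_{j+1}$ that appear in the definition of $h_j$ — so~\eqref{eq:U2} is precisely the convolution of~\eqref{eq:U1}. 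Second, I would show that~\eqref{eq:U1} at level $j+1$ follows from~\eqref{eq:U2} at level $j$: at the next active period $\tau_{n-(j+1)} \le \tau \le \tau_{n-(j+1)}+p$ the barriers switch back on, and $U$ again solves the heat equation on $(0,L)$ with vanishing boundary data~\eqref{eq:boundary} and initial data $U(\cdot,\tau_{n-(j+1)})$ coming from~\eqref{eq:U2}. Applying separation of variables once more, the new initial function must be expanded in the sine basis $\sin\tfrac{k_{j+2}\pi x}{L}$, which is exactly the operation encoded by the new factor $\tfrac{2}{L}\sin\tfrac{k_{j+2}\pi x_{j+2}}{L}\sin\tfrac{k_{j+2}\pi x}{L}\,e^{-(k_{j+2}\pi/L)^2(\tau-\tau_{n-(j+1)})}$ in $g_{j+1}$, with the inner $h_j$ supplying the coefficient to be projected.

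Concretely, each step reduces to recognizing that the heat-kernel convolution and the Fourier-sine projection, respectively, act only on the outermost variable and simply prepend one more factor to the nested function, leaving the inner $g_{j-1}$ or $h_{j-1}$ untouched; this is why the recursion for $g_j$ and $h_j$ is set up to peel off one barrier period at a time. The main obstacle I expect is bookkeeping rather than analysis: one must carefully match the time shifts $\tau-\tau_{n-j}$ and $\tau-(\tau_{n-j}+p)$, the indices of the dummy variables $x_i,y_i,k_i$, and the integration limits so that the output of one propagation step is syntactically the admissible initial datum for the next. A secondary point requiring care is justifying the interchange of the infinite sums with the integrals at each convolution and projection step — the exponential decay $e^{-(k\pi/L)^2(\tau-\tau_{n-j})}$ for $\tau > \tau_{n-j}$ gives uniform convergence on the relevant compact sets, so Fubini and term-by-term integration are legitimate, but this should be stated to make the formal manipulations rigorous.
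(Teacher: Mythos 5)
Your proposal is correct and follows essentially the same route as the paper: an induction on $j$ alternating between Fourier-sine separation of variables on the barrier rectangles (which prepends the $g$-factor) and heat-kernel convolution on the barrier-free strips (which prepends the $h$-factor), with the base case supplied by Proposition~\ref{prop:one per}. Your added remark on justifying the sum--integral interchange via the Gaussian decay in $k_j$ is a sensible supplement that the paper leaves implicit.
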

\begin{proof}
  The idea is to iterate the argument of Proposition~\ref{prop:one per}
  (see Figure~\ref{fig:barriers}). We use separation of
  variables in the barrier periods, and convolution with the heat
  kernel for the periods in between. The required initial condition at
  the left boundary comes from the previous step of the iteration
  (for $j=0$ also from the payoff, of course).
  
  For $j=0$, formula~\eqref{eq:U1} is identical to~\eqref{U rect}.
  To show~\eqref{eq:U2} for $j=0$, let $p<\tau<\tau_{n-1}$
  (recall that $\tau_n=0$) and $x\in\mathbb{R}$,
  and use~\eqref{eq:conv} and~\eqref{U rect} to obtain
  \begin{align*}
    U(x,\tau) &= \frac{1}{\sqrt{2\pi}} 
      \int_{-\frac{x}{\sqrt{2(\tau-p)}}}^{\frac{L-x}{\sqrt{2(\tau-p)}}} 
      U(x+y_1 \sqrt{2(\tau-p)}, p)e^{-y_1^2/2}dy_1 \\
    &= \frac{1}{\sqrt{2\pi}} \int_{-\infty}^\infty
      \mathbf{1}_{\left[-\tfrac{x}{\sqrt{2(\tau-p)}},
      \tfrac{L-x}{\sqrt{2(\tau-p)}}\right]}(y_{1}) \\
      & \qquad \qquad
      \int_0^L \sum_{k_1=0}^\infty g_0(k_1;x_1;;x+y_1\sqrt{2(\tau-p)},p)
      e^{-y_1^2/2}dx_1 dy_1 \\
    &= \int_{-\infty}^\infty \int_0^L \sum_{k_1=0}^\infty
      h_0(k_1;x_1;y_1;x,\tau) dx_1 dy_1.
  \end{align*}
  This is~\eqref{eq:U2} for $j=0$.
  
  Next consider a rectangle
  \begin{equation}\label{eq:rect j}
    (\tau,x) \in (\tau_{n-j}, \tau_{n-j}+p) \times (0,L), \qquad
      1\leq j < n.
  \end{equation}
  At the left boundary, the solution is $x_{j+1}\mapsto U(x_{j+1},\tau_{n-j})$.
  By the induction hypothesis, it equals~\eqref{eq:U2} with~$j$
  replaced by $j-1$:
    \begin{multline}\label{eq:U2 j-1}
    U(x_{j+1},\tau_{n-j}) = \underbrace{\int_{-\infty}^\infty \dots 
    \int_{-\infty}^\infty}_{j}
    \underbrace{\int_0^L \dots \int_0^L}_{j}
    \sum_{k_1=0}^\infty \dots \sum_{k_{j}=0}^\infty  \\
     h_{j-1}(k_1,\dots,k_{j}; x_1,\dots,x_{j}; y_1,\dots,y_{j}; x_{j+1},\tau_{n-j})
    dx_1 \dots dx_{j} dy_1 \dots dy_{j}. 
  \end{multline}
  The solution in the rectangle~\eqref{eq:rect j} is thus obtained by
  separation of variables as
  \begin{equation}\label{eq:sol j rect}
    U(x,\tau) = \sum_{k_{j+1}=0}^\infty b_{k_{j+1}} 
      \sin\left(\frac{k_{j+1}\pi}{L}x\right)
      e^{-(\frac{k_{j+1}\pi}{L})^{2}(\tau-\tau_{n-j})},
  \end{equation}
  where
  \begin{equation}\label{eq:fourier}
     b_{k_{j+1}} := \frac{2}{L} \int_{0}^{L} U(x_{j+1},\tau_{n-j})
       \sin\left(\frac{k_{j+1}\pi}{L}x_{j+1}\right) dx_{j+1}
  \end{equation}
  denote now the Fourier coefficients of $x_{j+1}\mapsto U(x_{j+1},\tau_{n-j})$.
  Inserting~\eqref{eq:U2 j-1} into~\eqref{eq:fourier}
  and then~\eqref{eq:fourier} into~\eqref{eq:sol j rect} yields~\eqref{eq:U1},
  by the definition of~$g_j$.
  
  Finally, consider a strip
  \begin{equation}\label{eq:strip}
    (\tau,x) \in (\tau_{n-j}+p, \tau_{n-(j+1)}) \times \mathbb{R}, \qquad
      1\leq j < n.
  \end{equation}
  At the left boundary, we use~\eqref{eq:U1} as induction hypothesis.
  The solution thus vanishes for $x\notin(0,L)$, and for $\tau=\tau_{n-j}+p$
  and $x\in(0,L)$ it is
  \begin{multline}\label{eq:U1 again}
    U(x,\tau_{n-j}+p) = \underbrace{\int_{-\infty}^\infty \dots \int_{-\infty}^\infty}_{j}
    \underbrace{\int_0^L \dots \int_0^L}_{j+1}
    \sum_{k_1=0}^\infty \dots \sum_{k_{j+1}=0}^\infty \\
     g_j(k_1,\dots,k_{j+1}; x_1,\dots,x_{j+1}; y_1,\dots,y_j; x,\tau_{n-j}+p)
    dx_1 \dots dx_{j+1} dy_1 \dots dy_{j}.  
  \end{multline}
  As above, the solution in the strip~\eqref{eq:strip} is found by convolution
  with the heat kernel:
  \begin{multline*}
    U(x,\tau) 
    = \frac{1}{\sqrt{2\pi}} \int_{-\infty}^\infty
      \mathbf{1}_{\left[-\tfrac{x}{\sqrt{2(\tau-(\tau_{n-j}-p))}},
      \tfrac{L-x}{\sqrt{2(\tau-(\tau_{n-j}-p))}}\right]}(y_{j+1}) \\
      \qquad U(x+y_{j+1}
      \sqrt{2(\tau-(\tau_{n-j}-p))}, \tau_{n-j}-p)e^{-y_{j+1}^2/2}dy_{j+1}.
  \end{multline*}
  Now insert~\eqref{eq:U1 again}, with~$x$ replaced by $x+y_{j+1}
  \sqrt{2(\tau-(\tau_{n-j}+p))}$, and use the definition of~$h_j$
  to conclude~\eqref{eq:U2}.
\end{proof}
Note that Proposition~\ref{prop:one per} corresponds to~\eqref{eq:U2} for $j=0$.
As seen there, the integral $\int_0^L dx_1$ can be done in closed form.
We have not included  this evaluation
in Theorem~\ref{thm:main} to increase its readability.

If a different option (a call, say) with the same barrier conditions
is to be priced instead of a digital payoff,
the quantity $e^{-\alpha x_1}$ in~\eqref{eq:g0} should be replaced
by the appropriate payoff $U(x_1,0)$.

\section{Structure floors}\label{se:struct}

In this section we assume that our tenor structure satisfies
$T_{i-1}+P=T_i$ for $1\leq i< n$, and define $T_n:=T_{n-1}+P$.
We consider a structured note with~$n$
coupons, where the $i$-th coupon consists of a payment of
\begin{equation}\label{eq:C_i}
  C_i = \mathbf{1}_{\{B_{\mathrm{low}} < S_t < B_{\mathrm{up}},\  t \in [T_{i-1},T_i] \}},
  \qquad 1\leq i\leq n,
\end{equation}
at time $T_i$.
These coupons can be priced by Proposition~\ref{prop:one per} (replace
$T_0$ by $T_{i-1}$).
In addition, the holder receives the terminal premium
\begin{equation}\label{eq:struc floor}
  \left(F - \sum_{i=1}^n C_i\right)^+
\end{equation}
at $T_n$, where $F>0$. This means that the aggregate payoff $A := \sum_{i=1}^n C_i$
of the note is floored at~$F$, which is a popular feature of structured notes.
While the individual coupons are straightforward to valuate, it is less
obvious how to get a handle on the law of~$A$. We now show that this law
is linked to barrier options with several barrier periods.
Indeed, the following result is based on the fact that the moments
\begin{equation}\label{eq:momA}
  \expec[A^\nu] = \sum_{i=0}^n i^\nu \prob[A=i], \qquad 1\leq \nu < n,
\end{equation}
of~$A$ are linear combinations
of multi-period double barrier option prices, with coefficients
\begin{equation}\label{eq:cJ}
  c(\nu,J) := \sum_{\substack{0\leq i_1,\dots,i_n\leq \nu \\ \mathrm{supp}
    (\mathbf{i}) = J}} \binom{\nu}{i_1,\dots,i_n}, \qquad J\subseteq \{1,\dots,n\}.
\end{equation}
(The notation $\mathrm{supp}(\mathbf{i}) = J$ means that~$J$ is
the set of indices such that the corresponding components of the vector
$\mathbf{i}=(i_1,\dots,i_n)$ are non-zero.)
\begin{theorem}\label{thm:floor}
  The price of the structure floor~\eqref{eq:struc floor} at time~$t<T_0$ can be expressed as
  \begin{equation}\label{eq:tp price}
    e^{-r(T_n-t)} \expec[(F-A)^+]
      = e^{-r(T_n-t)} \sum_{i=0}^{n \wedge \lfloor F \rfloor} (F-i) \prob[A=i],
  \end{equation}
  where
  \begin{equation}\label{eq:A=n}
    \prob[A=n] = BD(S_t,t;\{T_0\},T_n-T_0,B_{\mathrm{low}},B_{\mathrm{up}},0).
  \end{equation}
  The other point masses $\prob[A=i]$ in~\eqref{eq:tp price}
  can be recovered from the moments of~$A$ by solving~\eqref{eq:momA}
  (including $\nu=0$, of course).
  The moments in turn can be computed from barrier digital prices by
  ($1\leq \nu<n$)
  \begin{equation}\label{eq:A moments}
    \expec[A^\nu] = \sum_{J\subseteq \{1,\dots,n\}}
    c(\nu,J) \cdot
    BD(S_t,t;\{T_j:j\in J\},P,B_{\mathrm{low}},B_{\mathrm{up}},0),
  \end{equation}
  where the coefficients $c(\nu,J)$ are defined in~\eqref{eq:cJ}.
\end{theorem}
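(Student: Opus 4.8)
The plan is to first reduce the floor price to the point masses $\prob[A=i]$, and then to recover those masses from the moments of $A$ together with the single explicitly known mass $\prob[A=n]$. Formula~\eqref{eq:tp price} is immediate once one observes that $A=\sum_{i=1}^n C_i$ is a sum of $\{0,1\}$-valued random variables and hence takes values in $\{0,1,\dots,n\}$. Writing $(F-A)^+=\sum_{i=0}^n (F-i)^+\mathbf{1}_{\{A=i\}}$ and using that, for integer $i$, one has $(F-i)^+=F-i$ exactly when $i\le\lfloor F\rfloor$ and $(F-i)^+=0$ otherwise, the sum collapses to the range $0\le i\le n\wedge\lfloor F\rfloor$. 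Taking expectations and pulling out the discount factor $e^{-r(T_n-t)}$ then gives~\eqref{eq:tp price}.

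The core of the argument is the moment formula~\eqref{eq:A moments}, which I would obtain from a multinomial expansion,
\[
  A^\nu=\Bigl(\sum_{i=1}^n C_i\Bigr)^\nu
    =\sum_{\substack{i_1+\dots+i_n=\nu\\ i_j\ge 0}}\binom{\nu}{i_1,\dots,i_n}\prod_{j=1}^n C_j^{\,i_j},
\]
combined with the observation that each indicator is \emph{idempotent}: $C_j^{i_j}=C_j$ whenever $i_j\ge1$, while $C_j^0=1$. Hence $\prod_{j=1}^n C_j^{\,i_j}=\prod_{j\in\mathrm{supp}(\mathbf{i})}C_j$ depends on the exponent vector $\mathbf{i}$ only through its support. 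Grouping the multi-indices by a common support $J$ turns the inner sum of multinomial coefficients into precisely $c(\nu,J)$ from~\eqref{eq:cJ}, so that $A^\nu=\sum_{J\subseteq\{1,\dots,n\}}c(\nu,J)\prod_{j\in J}C_j$. Taking expectations and identifying $\expec[\prod_{j\in J}C_j]$ with the corresponding multi-period barrier-digital value $BD$---with active periods indexed by $J$ and the rate argument set to $0$, so that no discount factor is applied---yields~\eqref{eq:A moments}.

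It remains to recover the masses. The event $\{A=n\}$ forces every coupon to pay, i.e.\ the path to remain between the barriers on each $[T_{i-1},T_i]$; because the contiguity assumption $T_{i-1}+P=T_i$ makes these intervals tile $[T_0,T_n]$, this is the single event of staying inside the corridor throughout one period of length $T_n-T_0$, and evaluating Proposition~\ref{prop:one per} at rate $0$ gives~\eqref{eq:A=n}. With $\prob[A=n]$ in hand, the relations~\eqref{eq:momA} for $\nu=0,1,\dots,n-1$ form the linear system $\sum_{i=0}^{n-1}i^\nu\prob[A=i]=\expec[A^\nu]-n^\nu\prob[A=n]$, whose coefficient matrix $(i^\nu)_{0\le\nu,i\le n-1}$ is Vandermonde with distinct nodes $0,1,\dots,n-1$ and therefore invertible, so the masses $\prob[A=0],\dots,\prob[A=n-1]$ are uniquely determined. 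I expect the only genuine subtlety---and the reason $\prob[A=n]$ must be supplied by~\eqref{eq:A=n} rather than read off from the moments---to sit in the combinatorics of the preceding step: for $\nu<n$ the full support $J=\{1,\dots,n\}$ cannot occur, since $\sum_j i_j=\nu<n$ with all $i_j\ge1$ is impossible, so $c(\nu,\{1,\dots,n\})=0$ and the $n$-period barrier digital never enters~\eqref{eq:A moments}. The main effort is therefore not a single estimate but the correct bookkeeping of the idempotency collapse, together with the verification that the one piece of information the moments $\nu<n$ cannot contain is exactly the mass furnished separately.
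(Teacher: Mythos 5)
Your proof is correct and follows essentially the same route as the paper: the multinomial expansion combined with idempotency of the indicators $C_j$, grouping of exponent vectors by their support $J$ to produce the coefficients $c(\nu,J)$, and identification of $\expec\bigl[\prod_{j\in J} C_j\bigr]$ with the undiscounted multi-period digital price give~\eqref{eq:A moments}, while the contiguity assumption $T_{i-1}+P=T_i$ identifies $\{A=n\}$ with a single barrier period $[T_0,T_n]$, exactly as in~\eqref{eq:A=n}. Your explicit Vandermonde argument for recovering the masses $\prob[A=0],\dots,\prob[A=n-1]$ is a useful detail that the paper leaves implicit in the phrase ``solving~\eqref{eq:momA}''.
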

\begin{proof}
  The expression~\eqref{eq:tp price} is clear. The event in~\eqref{eq:A=n}
  means that all of the~$n$ coupons~\eqref{eq:C_i} are paid.
  By our assumption that $T_i = T_{i-1}+P$,
  its risk-neutral probability is the (undiscounted) price of a double barrier digital
  with one barrier period $[T_0,T_n]$, which yields~\eqref{eq:A=n}.
  To prove~\eqref{eq:A moments},
  we calculate
  \begin{align*}
  \expec[A^\nu] = \expec\left[\left(\sum_{i=1}^n C_i\right)^\nu\right] &= \sum_{i_1,\dots,i_n} \binom{\nu}{i_1,\dots,i_n} \expec[C_1^{i_1} \dots C_n^{i_n}] \\
  &= \sum_{i_1,\dots,i_n} \binom{\nu}{i_1,\dots,i_n} \expec\bigg[\prod_{\substack{j=1 \\ i_j>0}}^n C_j\bigg] \\
  &= \sum_{J\subseteq \{1,\dots,n\}} \Bigg( \sum_{\substack{i_1,\dots,i_n \\ \mathrm{supp}
    (\mathbf{i}) = J}} \binom{\nu}{i_1,\dots,i_n} \Bigg) \expec\Big[\prod_{j\in J} C_j\Big].
  \end{align*}
  Now observe that $\prod_{j\in J} C_j$ is the payoff of a double barrier digital
  with barrier periods $[T_j,T_j+P]$ for $j\in J$.
\end{proof}
When calculating the value $BD$ in~\eqref{eq:A moments} for,
say, $J=\{1,2,4,5,6\}$, the adjacent barrier periods should be concatenated:
Do not compute the price for five barrier periods of length~$P$, but rather
for two periods with lengths $2P$ and $3P$. We did not include this obvious
extension (barrier periods of variable length) in Theorem~\ref{thm:main} in order
not to complicate an already heavy notation.

\section{Approximation by a corridor option}\label{se:corr}

Theorems~\ref{thm:main} and~\ref{thm:floor} express the price of the structure floor~\eqref{eq:struc floor} by iterated sums and integrals. Due to the factors
of order $e^{-k_j^2}$, the infinite series $\sum_{k_j}$
may be truncated after just a few terms.
Still, numerical quadrature may be too involved for a large number of coupons, so
we present an approximation. Let us fix a maturity~$T=T_n$ and assume
that the~$n$ coupon periods
\[
  \mathcal{T}_i^n := [\tfrac{i-1}{n}T, \tfrac{i}{n}T], \qquad 1\leq i\leq n,
\]
have length $T/n$. For large~$n$, the proportion of intervals during
which the underlying stays inside the barrier interval
\[
  \mathcal{B}:=[B_{\mathrm{low}}, B_{\mathrm{up}}]
\]
is similar to the proportion of time that the underlying spends inside~$\mathcal{B}$,
i.e., the occupation time.
This is made precise in the following result, which
holds not only for the Black-Scholes model,
but for virtually any continuous model.
Note that the level sets of geometric Brownian motion have a.s.\ measure
zero (cf.~\cite[Theorem~2.9.6]{KaSh91}).
\begin{theorem}\label{thm:corr}
  Let $(S_t)_{t\geq0}$ be a continuous stochastic process such that for each
  real~$c$ the level set $\{t: S_t=c\}$ has a.s.\ Lebesgue measure zero.
  Then we have a.s.
  \[
    \lim_{n\to\infty} \frac1n \sum_{i=1}^n \mathbf{1}_{\{S_t \in
      \mathcal{B}\ \forall t\in \mathcal{T}_i^n\}}
      = \int_0^T \mathbf{1}_{\mathcal{B}}(S_t)dt.
  \]
\end{theorem}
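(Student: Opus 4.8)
The plan is to recognise the left-hand sum as a Darboux sum for the integrand $f(t):=\mathbf{1}_{\mathcal{B}}(S_t)$ on the uniform partition of $[0,T]$ into the cells $\mathcal{T}_i^n$, and then to invoke Riemann integrability. The key observation is that, since $f$ takes only the values $0$ and $1$,
\[
  \mathbf{1}_{\{S_t\in\mathcal{B}\ \forall t\in\mathcal{T}_i^n\}} = \inf_{t\in\mathcal{T}_i^n} f(t) =: m_i^n,
\]
so that $\tfrac{T}{n}\sum_{i=1}^n \mathbf{1}_{\{\dots\}}$ is precisely the lower Darboux sum of $f$ for this partition (up to normalisation; the identity as stated is the case $T=1$, and the general case follows by the same argument after dividing by $T$). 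Writing $M_i^n:=\sup_{t\in\mathcal{T}_i^n} f(t)$ for the corresponding suprema, I would sandwich the occupation time between the lower and upper Darboux sums,
\[
  \tfrac{T}{n}\sum_{i=1}^n m_i^n \;\le\; \int_0^T \mathbf{1}_{\mathcal{B}}(S_t)\,dt \;\le\; \tfrac{T}{n}\sum_{i=1}^n M_i^n,
\]
and reduce everything to showing that these two bounds share a common limit, i.e.\ that upper minus lower tends to $0$.

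Since $f$ is $\{0,1\}$-valued, $M_i^n-m_i^n$ equals $1$ exactly on those cells on which $f$ is non-constant and $0$ otherwise; hence the gap between the two Darboux sums is $\tfrac{T}{n}$ times the number of cells on which $S$ takes values both inside and outside $\mathcal{B}$. By continuity of $S$ and the intermediate value theorem, every such cell must contain a point of the boundary-hitting set $G:=\{t\in[0,T]:S_t\in\{B_{\mathrm{low}},B_{\mathrm{up}}\}\}$. The main obstacle is that $G$ has Lebesgue measure zero by hypothesis (intersecting the two a.s.\ null events for $c=B_{\mathrm{low}}$ and $c=B_{\mathrm{up}}$), yet a measure-zero set may still meet a large proportion of the cells, so one cannot simply bound the number of straddling cells by $\lambda(G)$ divided by the mesh. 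The resolution, which is the crux of the argument, is that $G$ is \emph{closed} --- it is the preimage of a two-point closed set under the continuous map $t\mapsto S_t$ --- hence compact in $[0,T]$; for a compact set, Lebesgue measure zero upgrades to zero Jordan content. Consequently the total length of the cells meeting $G$ converges to the outer Jordan content of $G$, namely $0$, and the Darboux gap vanishes.

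With the gap established, the sandwich forces both Darboux sums to converge to the occupation time, which yields the claim almost surely. Equivalently, one may route the whole argument through Lebesgue's criterion: $f$ is bounded and its set of discontinuities is contained in $G$, hence is a.s.\ a null set, so $f$ is a.s.\ Riemann integrable; the lower Darboux sums over any sequence of partitions whose mesh ($=T/n$) tends to zero then converge to the Riemann integral, which coincides with the Lebesgue integral $\int_0^T\mathbf{1}_{\mathcal{B}}(S_t)\,dt$. I expect the only genuinely delicate point to be the passage from ``measure zero'' to ``negligible cell count,'' i.e.\ the compactness/zero-content step; the remaining estimates are routine, and only the continuity of $S$ (rather than any Black--Scholes specifics) is used, in line with the generality asserted in the statement.
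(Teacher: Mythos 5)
Your argument is correct, and every step you flag as delicate does hold: cells on which $f(t)=\mathbf{1}_{\mathcal{B}}(S_t)$ is non-constant contain a boundary-hitting time by the intermediate value theorem, the set $G$ of such times is compact, and a compact Lebesgue-null set has outer Jordan content zero, which makes the Darboux gap $\tfrac{T}{n}\,\#\{i:\ \mathcal{T}_i^n\cap G\neq\emptyset\}$ vanish. However, your route is genuinely different from the paper's. The paper works with the same lower step function (it calls it $X_n=\sum_i X_{ni}$) but never introduces upper sums or Jordan content: it shows instead that $X_n\to \mathbf{1}_{\mathcal{B}}(S_\cdot)$ \emph{pointwise} off $G$ --- trivially at times with $S_t\notin\mathcal{B}$, and by continuity (the cell containing $t$ eventually lies in a neighborhood where $S$ stays in $\mathcal{B}$) at times with $S_t\in\mathrm{int}(\mathcal{B})$ --- and then applies the dominated convergence theorem, using that $G$ is a.s.\ null so that the convergence holds a.e. Thus the hypothesis on null level sets enters both proofs, but through different mechanisms: for you it upgrades to zero Jordan content via compactness; for the paper it only needs to guarantee a.e.\ pointwise convergence. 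The paper's proof is accordingly shorter, since dominated convergence sidesteps any need to count straddling cells, which is precisely the step you identify as the crux. What your approach buys in exchange: it avoids measure-theoretic convergence theorems entirely (everything is Riemann/Darboux), it yields the extra information that the upper sums (cells merely touching $\mathcal{B}$) converge to the same limit, and, in the Lebesgue-criterion packaging, it isolates the clean statement that $t\mapsto\mathbf{1}_{\mathcal{B}}(S_t)$ is a.s.\ Riemann integrable. You also handle the $T$-versus-$1$ normalization ($|\mathcal{T}_i^n|=T/n$) more carefully than the paper, whose final display implicitly takes $T=1$.
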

\begin{proof}
  For $1\leq i\leq n$,
  define processes $(X_{ni}(t))_{0\leq t\leq T}$ by
  \[
    X_{ni}(t) :=
    \begin{cases}
      1 & \text{if}\ t\in \mathcal{T}_i^n\ \text{and}\ S_u\in \mathcal{B}\
        \forall u\in \mathcal{T}_i^n \\
      0 & \text{otherwise}.
    \end{cases}
  \]
  Put $X_n:=\sum_{i=1}^n X_{ni}$.
  We claim that, a.s., the function $X_n(\cdot)$ converges pointwise
  on the set $[0,T]\setminus \{t: S_t = B_{\mathrm{low}}\ \text{or}\
  S_t=B_{\mathrm{up}}\}$, with limit $\mathbf{1}_{\mathcal{B}}(S_\cdot)$.
  Indeed, if $t\in[0,T]$ is such that $S_t\notin \mathcal{B}$,
  then $X_{n}(t)=0$ for all~$n$. If, on the other hand,
  $S_t\in \mathrm{int}(\mathcal{B})$, then~$t$
  has a neighborhood~$V$ such that $S_u \in \mathcal{B}$ for all $u\in V,$ by
  continuity. Hence $X_{n}(t)=1$ for large~$n$. Since we have pointwise
  convergence on a set of (a.s.) full measure, we can apply the
  dominated convergence theorem to conclude
  \[
    \lim_{n\to\infty} \int_0^T X_n(t) dt
    = \int_0^T \mathbf{1}_\mathcal{B}(S_t) dt, \qquad \text{a.s.}
  \]
  But this is the desired result, since
  \begin{align*}
    \int_0^T X_n(t) dt &=
     \sum_{i=1}^n \int_0^T X_{ni}(t) dt \\
     &= \sum_{i=1}^n \int_{\mathcal{T}_i^n} X_{ni}(t) dt \\
     &= \sum_{i=1}^n |\mathcal{T}_i^n|\ \mathbf{1}_{\{S_t\in
       \mathcal{B}\ \forall t\in \mathcal{T}_i^n\}}
     = \frac1n \sum_{i=1}^n  \mathbf{1}_{\{S_t\in \mathcal{B}\ \forall t\in \mathcal{T}_i^n\}}.
  \end{align*}
\end{proof}
Theorem~\ref{thm:corr} suggests the approximation
\begin{equation}\label{eq:appr}
  e^{-rT}\expec(F-A)^+ \approx n e^{-rT} \expec\left(\frac{F}{n} -
    \int_0^{T} \mathbf{1}_\mathcal{B}(S_t) dt\right)^+
\end{equation}
for the price of the structure floor~\eqref{eq:struc floor}.
It is obtained from replacing~$F$ by $F/n$ in the relation
\[
  \expec(nF-A)^+ \sim n \expec\left(F- \int_0^{T} \mathbf{1}_\mathcal{B}(S_t)dt\right)^+,
\]
which follows from Theorem~\ref{thm:corr}.
On the right hand side of~\eqref{eq:appr} we recognize
the price of a put on the occupation time of~$(S_t)$,
also called a corridor option. Fusai~\cite{Fu00} studied such options
in the Black-Scholes model. In particular, his Theorem~1 gives an
expression for the characteristic function of $\int_0^T \mathbf{1}_\mathcal{B}(S_t)dt$.
Since the formula is rather involved, we do not reproduce it here.
Section~4 of Fusai~\cite{Fu00} explains how to compute the corridor
option price from the characteristic function by numerical Laplace inversion.

This approximation holds for period lengths tending to zero.
One could also let the number of coupons tend
to infinity for a fixed period length~$P$, so that maturity increases linearly
with~$n$. The dependence of the random variables~$C_i$ and~$C_j$
decreases for large $|i-j|$, and so we conjecture a central limit theorem, i.e.,
that
\[
  \frac{A - \expec[A]}{\sqrt{\mathbf{Var}[A]}}
\]
converges in law to a standard normal random variable as $n\to\infty$. Note that
$\expec[A]=\sum_{i=1}^n \expec[C_i]$ and $\expec[A^2]
=\expec[A] + 2 \sum_{i<j} \expec[C_i C_j]$ can be easily
computed from Proposition~\ref{prop:one per} respectively Theorem~\ref{thm:main}.
The structure floor~\eqref{eq:struc floor} could then be
approximately valuated by a Bachelier-type put price formula.
We were not able, though, to verify any of the mixing conditions~\cite{Br05}
that could lead to a central limit result. This is therefore left for
future research.

\bibliographystyle{siam}
\bibliography{barriers}

\end{document}